\newtheorem{theorem}{Theorem}
\newcommand{\nn}{\nonumber}
\def\sst#1{{\scriptscriptstyle #1}}
\def\oneone{\rlap 1\mkern4mu{\rm l}}
\title{Hodge Duals in Spherical Compactifications}
\author[a]{Arash Azizi}
\affiliation[a]{{\it The Institute for Quantum Science and Engineering,
Texas A\&M University,\\ College Station, TX 77843, U.S.A.}}
\emailAdd{sazizi@tamu.edu}
\abstract{
We present a general formalism for computing the Hodge dual of differential forms in arbitrary dimensions subject to a spherical constraint. This problem arises naturally in Kaluza-Klein compactifications, where sphere reductions demand careful treatment of differential forms constrained to lie on embedded submanifolds. We derive an explicit expression for the Hodge dual of a $p$-form in the presence of such constraints and validate our general ansatz through illustrative examples in three dimensions, including both flat and diagonal metric backgrounds. The resulting framework offers a systematic and practical tool for handling constrained Hodge duals, with direct applications to consistent truncations in supergravity and string theory compactifications.
}
\begin{document} 
\maketitle
\flushbottom

\section{Introduction}
Dimensional reduction, first introduced by Kaluza \cite{Kaluza:1921tu} and Klein \cite{Klein:1926tv}, has remained a cornerstone in connecting higher-dimensional theories to the observed four-dimensional universe. In modern contexts, it provides a bridge between string/M-theory and effective low-energy physics. The simplest implementations—such as compactifications on tori—lead to Abelian gauge symmetries and can be treated using group-theoretic techniques. However, more physically compelling reductions, particularly those involving spheres, are far more intricate. Compactification on $S^n$ yields non-Abelian gauge groups, with the isometry group $SO(n+1)$ of the sphere inducing a gauged symmetry in the lower-dimensional theory. This mechanism underlies many constructions of gauged supergravities.

The revival of interest in sphere reductions during the 1980s was largely driven by the development of eleven-dimensional supergravity and its compactification on $S^7$ \cite{deWit:1986oxb}, leading to maximal $SO(8)$ gauged supergravity in four dimensions. Similarly, compactifications on $S^5$ and $S^4$ played pivotal roles in understanding $\text{AdS}_5/\text{CFT}_4$ and $\text{AdS}_7/\text{CFT}_6$ correspondences, respectively \cite{Kim:1985ez, Nastase:1999cb}. Additionally, Salam and Sezgin made significant contributions through their study of chiral compactifications of $N=2$ Einstein-Maxwell supergravity in six dimensions on Minkowski space times $S^2$ \cite{Salam_Sezgin1984}. However, ensuring the consistency of such compactifications—i.e., that all solutions of the lower-dimensional theory uplift to full solutions of the higher-dimensional theory—remains a subtle and often unsolved problem. Fully consistent sphere reductions have been achieved only in certain highly symmetric settings \cite{Cvetic:2000dm, Cvetic:2000yp}.

One persistent technical difficulty in this context is the proper definition and computation of the Hodge dual of differential forms under the constraint $\sum_{i} x^i x^i = 1$, arising from the embedding of the compact manifold (typically a round sphere) in a higher-dimensional ambient space. This difficulty is especially relevant in the construction of Freund-Rubin-type ansätze \cite{Freund:1980xh}, in which fluxes threading the compact space are related via Hodge duality to spacetime components of field strengths. Understanding the precise form of such duals is essential for both classical solution building and quantum consistency.

Various works have addressed this problem in specific examples. The series of papers by Cvetič, Lü, and Pope (see, e.g., \cite{Cvetic:1999au, Cvetic:2000ah, Cvetic:2000tb, Cvetic:2003jy, Cvetic:2003xr}) employed ingenious parametrizations of the internal manifold to compute Hodge duals and formulate consistent truncations. Related efforts have provided partial results for the Hodge dual of differential forms in five- and six-dimensional settings \cite{Gibbons:2003gp}. Yet despite these successes, a general expression for the Hodge dual of an arbitrary $p$-form in such constrained geometries has not appeared in the literature.

In this work, we address this gap by analyzing the problem in increasing generality. We first study the Hodge dual of differential forms on the 2-sphere embedded in $\mathbb{R}^3$, using both intrinsic and extrinsic methods. This case allows for a transparent comparison between coordinate-based and orthonormal-frame approaches and reveals how the constraint modifies the standard duality structure. We then proceed to arbitrary dimension and general background metrics, introducing a normalized projection vector field that projects ambient-space forms onto the constrained hypersurface. This leads to a closed-form expression for the Hodge dual that is consistent with the fundamental double-dual identity and valid for all diagonal and non-diagonal metrics.

The structure of this paper is as follows. In Section~\ref{sec2}, we review standard results on Hodge duality and establish our notational and convention choices. We also present several auxiliary results that will be essential in subsequent sections. In Section~\ref{sec3}, we examine specific examples in three dimensions, which provide insight into the general structure. In Section~\ref{sec.non-cov}, we formulate a general ansatz for the Hodge dual in the presence of constraints and express the result in (\ref{ansatz.final}). Finally, we conclude in Section~\ref{con} with a discussion of physical applications and directions for future work.

\section{The notations and preliminaries} \label{sec2}
We follow the conventions and notation of~\cite{popekk}. For completeness, we summarize the main definitions here. Consider a general $p$-form $\omega_{\sst{(p)}}$ on a $D$-dimensional spacetime with $t$ time-like directions:
\begin{align} 
\omega_{\sst{(p)}} = \frac{1}{p!} \, \omega_{\mu_1 \mu_2 \cdots \mu_p} \,
dx^{\mu_1} \wedge dx^{\mu_2} \wedge \cdots \wedge dx^{\mu_p} \,.
\end{align}
The Hodge dual of a basis $p$-form is defined by
\begin{align}
* (dx^{\mu_1} \wedge \cdots \wedge dx^{\mu_p})
= \frac{1}{q!} \, \epsilon_{\nu_1 \cdots \nu_q}{}^{\mu_1 \cdots \mu_p} \,
dx^{\nu_1} \wedge \cdots \wedge dx^{\nu_q}, \label{dual}
\end{align}
where $q = D - p$. The Levi-Civita tensor $\epsilon_{\mu_1 \cdots \mu_D}$ is defined as
\begin{align}
\epsilon_{\mu_1 \cdots \mu_D} = \sqrt{|g|} \, \varepsilon_{\mu_1 \cdots \mu_D},
\end{align}
where $\varepsilon_{\mu_1 \cdots \mu_D}$ is the Levi-Civita symbol, or tensor density, which takes values $\pm 1$ or $0$ depending on whether the set of indices $(\mu_1, \cdots, \mu_D)$ is an even permutation, odd permutation, or not a permutation of $(0,1,\dots,D-1)$, respectively.

The fully contravariant version of the Levi-Civita symbol satisfies
\begin{align}
\varepsilon^{\mu_1 \cdots \mu_D} = (-1)^t \, \varepsilon_{\mu_1 \cdots \mu_D},
\end{align}
where $t$ is the number of time-like components in the metric. Note that $\varepsilon_{\mu_1 \cdots \mu_D}$ is not a true tensor, so its indices cannot be raised or lowered with the metric. However, the Levi-Civita tensor $\epsilon_{\mu_1 \cdots \mu_D}$ is a proper tensor, and its indices may be raised and lowered. Hence, the contravariant form of the Levi-Civita tensor is given by
\begin{align}
\epsilon^{\mu_1 \cdots \mu_D} = \frac{1}{\sqrt{|g|}} \, \varepsilon^{\mu_1 \cdots \mu_D}.
\end{align}

From this point forward, we adopt a convention in which the inverse metric is explicitly denoted by $g^{-1}$, rather than by raising indices. For example, instead of writing $A^i = g^{ij} A_j$, we write $g^{-1}_{ij} A_j$ to indicate the corresponding contravariant vector. Unless stated otherwise, Einstein summation over repeated indices is always assumed.

\section{Simple examples of the Hodge dual with a constraint}  \label{sec3}

\subsection{Euclidean metric in three dimensions}

We begin by studying the Hodge dual of differential forms subject to a constraint, focusing on the simplest nontrivial example: the unit 2-sphere embedded in $\mathbb{R}^3$. Consider the Euclidean metric in three dimensions:
\begin{align}
ds^2 = dx^2 + dy^2 + dz^2, \qquad \text{with the constraint} \qquad x^2 + y^2 + z^2 = 1.
\end{align}
Our goal is to compute $*dx$ under this constraint.

To proceed, we adopt the standard spherical coordinate parametrization:
\begin{align}
x = \sin\theta \cos\phi, \qquad y = \sin\theta \sin\phi, \qquad z = \cos\theta.
\end{align}
Using the orthonormal frame (vielbein) formalism, the induced metric on the unit sphere becomes
\begin{align}
ds^2 = d\theta^2 + \sin^2\theta \, d\phi^2 = (e^1)^2 + (e^2)^2,
\end{align}
with the corresponding vielbeins
\begin{align}
e^1 = d\theta, \qquad e^2 = \sin\theta \, d\phi.
\end{align}
In this orthonormal frame, the Hodge dual acts as
\begin{align}
* e^1 = \epsilon_2{}^1 e^2 = \epsilon_{21} e^2 = -e^2, \qquad
* e^2 = \epsilon_1{}^2 e^1 = \epsilon_{12} e^1 = e^1,
\end{align}
with the convention $\epsilon_{12} = -\epsilon_{21} = 1$. Therefore, the duals of the coordinate one-forms are
\begin{align}
* e^1 = * d\theta = -e^2 = -\sin\theta \, d\phi, \qquad
* e^2 = * (\sin\theta \, d\phi) = e^1 = d\theta,
\end{align}
which immediately implies
\begin{align}
* d\phi = \frac{d\theta}{\sin\theta}. \label{star-viel}
\end{align}
Then we have
\begin{align}
* dx &= \cos\theta \cos\phi \, *d\theta - \sin\theta \sin\phi \, *d\phi \nn\\
&= \frac{1}{x^2 + y^2} \left( x y z \, dx - x^2 z \, dy + y \, dz \right). \label{stardx2}
\end{align}
Furthermore, 
\begin{align}
*dy &= \cos\theta \sin\phi \, *d\theta + \sin\theta \cos\phi \, *d\phi \nn\\
&= \frac{1}{x^2 + y^2} \left( y^2 z \, dx - x y z \, dy - x \, dz \right). \label{stardy}
\end{align}
Finally, the Hodge dual of $dz$ follows from
\begin{align}
*dz = -\sin\theta \, *d\theta = -\sin\theta (-\sin\theta \, d\phi)
= x \, dy - y \, dx. \label{stardz}
\end{align}
To express the result just in terms of $dx$ and $dy$, one may eliminate $dz$ via $xdx + ydy + zdz = 0$. Then one finds:
\begin{align}
* dx =& \frac{1}{x^2 + y^2} (xyzdx - x^2zdy - \frac{y}{z}xdx - \frac{y}{z}ydy)
=  -\frac{xy}{z} dx - \frac{x^2z^2 + y^2}{(x^2 + y^2)z} dy,  \nn \\
* dy =& \frac{1}{x^2 + y^2} (y^2zdx - xyzdy + \frac{x}{z}xdx + \frac{x}{z}ydy)
= \frac{(y^2z^2 + x^2)dx}{(x^2 + y^2)z}  + \frac{xy}{z}dy. \label{stardx.dy.dz}
\end{align}
Note that the expressions in~\eqref{stardz}, and \eqref{stardx.dy.dz} are not manifestly symmetric. To obtain a more symmetric form, we eliminate $dx$ using the identity $xdx + ydy + zdz = 0$ 
and substitute this into $*dx$. Then $*dx$ and $*dy$ become
\begin{align}
* dx &= \frac{1}{x^2 + y^2} \left( - y z \, (ydy+zdz) - x^2 z dy +ydz\right) =ydz-zdy, \nn\\
* dy &=\frac{1}{x^2 + y^2} \left( y^2 z dx + xz (xdx+zdz) - x dz \right)=zdx-xdz. \label{stardx.dy.dz.1}
\end{align}
One may write~\eqref{stardx.dy.dz.1} in terms of
\begin{align}
\boxed{\quad * dx_i = \varepsilon_{ijk} \, x_j \, dx_k\,, \quad }\label{stardx.dy.dz.3}
\end{align}
where $x_1 = x$, $x_2 = y$, $x_3 = z$, and $\varepsilon_{123} = \varepsilon_{231} = \varepsilon_{312} = 1$. 

\subsection{Hodge dual via elimination of \texorpdfstring{$z$}{z}}

An alternative and more direct approach to compute the Hodge dual of differential forms on the unit sphere is to eliminate the coordinate $z$ using the constraint $x^2 + y^2 + z^2 = 1$, and hence $dz = -\frac{x \, dx + y \, dy}{z}$.  This allows us to express the induced metric solely in terms of $x$ and $y$:
\begin{align}
ds^2 &=  \left( 1 + \frac{x^2}{z^2} \right) dx^2 + 2 \frac{x y}{z^2} dx dy + \left( 1 + \frac{y^2}{z^2} \right) dy^2 \nn\\
&= \frac{1 - y^2}{z^2} dx^2 + 2 \frac{xy}{z^2} dx dy + \frac{1 - x^2}{z^2} dy^2.
\end{align}
Thus, the induced $2D$ metric can be written as
\begin{align}
ds^2 = h_{ab} dx^a dx^b, \qquad \text{with} \quad a, b = 1,2, \quad x^1 = x, \quad x^2 = y,
\end{align}
where the metric components are
\begin{align}
h_{11} = \frac{1 - y^2}{z^2}, \qquad
h_{12} = h_{21} = \frac{xy}{z^2}, \qquad
h_{22} = \frac{1 - x^2}{z^2}.
\end{align}

Using the definition~\eqref{dual}, the Hodge dual of a one-form can be written as
\begin{align}
* dx^1 = \epsilon_1{}^1 \, dx + \epsilon_2{}^1 \, dy, \qquad
* dx^2 = \epsilon_1{}^2 \, dx + \epsilon_2{}^2 \, dy. \label{*dx,*dy2}
\end{align}
To compute the components of the Levi-Civita tensor $\epsilon_a{}^b$, we use the relation $\epsilon_a{}^b = h_{ac} \, \epsilon^{cb}$ and $\epsilon^{ab} = \frac{1}{\sqrt{h}} \, \varepsilon^{ab}$, where $\varepsilon^{12} = -\varepsilon^{21} = 1$. This gives:
\begin{align}
\epsilon_1{}^1 &= h_{11} \epsilon^{11} + h_{12} \epsilon^{21} = h_{12} \epsilon^{21} = \frac{h_{12}}{\sqrt{h}} \varepsilon^{21} = -\frac{h_{12}}{\sqrt{h}}, \nn\\
\epsilon_1{}^2 &= h_{11} \epsilon^{12} + h_{12} \epsilon^{22} = h_{11} \epsilon^{12} = \frac{h_{11}}{\sqrt{h}} \varepsilon^{12} = \frac{h_{11}}{\sqrt{h}}, \nn\\
\epsilon_2{}^1 &= h_{22} \epsilon^{21} + h_{21} \epsilon^{11} = h_{22} \epsilon^{21} = \frac{h_{22}}{\sqrt{h}} \varepsilon^{21} = -\frac{h_{22}}{\sqrt{h}}, \nn\\
\epsilon_2{}^2 &= h_{21} \epsilon^{12} + h_{22} \epsilon^{22} = h_{21} \epsilon^{12} = \frac{h_{21}}{\sqrt{h}} \varepsilon^{12} = \frac{h_{21}}{\sqrt{h}}. \label{Lev3}
\end{align}
Substituting the expressions from~\eqref{*dx,*dy2} and~\eqref{Lev3}, we obtain the Hodge duals of $dx$ and $dy$ as:
\begin{align}
* dx = - \frac{h_{12}}{\sqrt{h}} \, dx - \frac{h_{22}}{\sqrt{h}} \, dy, \qquad
* dy = \frac{h_{11}}{\sqrt{h}} \, dx + \frac{h_{21}}{\sqrt{h}} \, dy. \label{*dx,*dy3}
\end{align}
The determinant of the induced metric is straightforward to compute:
\begin{align}
\det(h) = h_{11}h_{22} - h_{12}^2= \frac{1}{z^4} \left[ 1 - x^2 - y^2 \right] = \frac{1}{z^2}
 \qquad \Rightarrow \qquad \sqrt{h} = \frac{1}{z}. \label{det.h}
\end{align}
Substituting these results back into~\eqref{*dx,*dy3}, we find
\begin{align}
* dx = -\frac{xy}{z} \, dx - \frac{1 - x^2}{z} \, dy, \qquad
* dy = \frac{1 - y^2}{z} \, dx + \frac{xy}{z} \, dy, \label{stardx.dy.dz.2}
\end{align}
which matches the result obtained earlier by the more explicit computation in~\eqref{stardx.dy.dz}, using the identity $x^2 z^2 + y^2 = (x^2 + y^2)(1 - x^2)$.

\subsection{A diagonal metric in three dimensions}

We now generalize the previous discussion by considering a diagonal (but non-Euclidean) metric in three dimensions:
\begin{align}
ds^2 = g_{ij} \, dx^i dx^j, \qquad \text{with} \quad g_{ij} = 0 \ \text{for } i \neq j, \qquad \sum_i (x^i)^2 = 1.
\end{align}
Substituting into the metric, after eliminating $dz$ yields:
\begin{align}
ds^2 &= g_{11} \, dx^2 + g_{22} \, dy^2 + g_{33} \left( \frac{x \, dx + y \, dy}{z} \right)^2 \nn\\
&= h_{11} \, dx^2 + 2 h_{12} \, dx \, dy + h_{22} \, dy^2,
\end{align}
where the induced $2D$ metric components are
\begin{align}
h_{11} = g_{11} + \frac{x^2}{z^2} g_{33}, \qquad
h_{12} = \frac{xy}{z^2} g_{33}, \qquad
h_{22} = g_{22} + \frac{y^2}{z^2} g_{33}.
\end{align}
The determinant of this induced metric is:
\begin{align}
\det h = h_{11} h_{22} - h_{12}^2 
= \frac{1}{z^2} \left( x^2 g_{22} g_{33} + y^2 g_{11} g_{33} + z^2 g_{11} g_{22} \right).
\end{align}
Let us compute $*dz$, which turns out to be the most symmetric, since it is purely written in terms of $dx$ and $dy$.  From the previously established relation~\eqref{*dx,*dy3}, we have:
\begin{align}
* dz &= -\frac{1}{z} \left( x \, *dx + y \, *dy \right) \nn\\
&= -\frac{1}{z} \left[ x \left( -\frac{h_{12}}{\sqrt{g}} dx - \frac{h_{22}}{\sqrt{g}} dy \right) + y \left( \frac{h_{11}}{\sqrt{g}} dx + \frac{h_{21}}{\sqrt{g}} dy \right) \right] \nn\\
&= \frac{1}{\sqrt{g} z} \left[ (x h_{12} - y h_{11}) \, dx + (x h_{22} - y h_{12}) \, dy \right].
\end{align}
Substituting the explicit forms of $h_{ab}$ yields:
\begin{align}
* dz &= \frac{1}{\sqrt{g} z} \left[ x \cdot \frac{xy}{z^2} g_{33} - y \left( g_{11} + \frac{x^2}{z^2} g_{33} \right) \right] dx \nn\\
&\quad + \frac{1}{\sqrt{g} z} \left[ x \left( g_{22} + \frac{y^2}{z^2} g_{33} \right) - y \cdot \frac{xy}{z^2} g_{33} \right] dy \nn\\
&= \frac{1}{\sqrt{g} z} \left( -y g_{11} \, dx + x g_{22} \, dy \right).
\end{align}
Thus, the Hodge dual of $dz$ becomes:
\begin{align}
\boxed{\quad 
*dz = \frac{-y g_{11} \, dx + x g_{22} \, dy}{\sqrt{x^2 g_{22} g_{33} + y^2 g_{11} g_{33} + z^2 g_{11} g_{22}}}
\quad} \,. \label{D=3explicit}
\end{align}
Analogous expressions for $*dx$ and $*dy$ can be derived similarly. As a consistency check, setting $g_{ij} = \delta_{ij}$ in~\eqref{D=3explicit} recovers the result of~\eqref{stardx.dy.dz.3}.

\section{Finding the Hodge dual with a constraint}  \label{sec.non-cov}

We now aim to derive the Hodge dual of a $p$-form using an explicit notation that emphasizes the inverse metric components, a convention that proves especially convenient in the presence of constraints. In this section, we adopt a convention in which the inverse metric is explicitly denoted by $g^{-1}$, rather than by raising indices. For example, instead of writing $A^i = g^{ij} A_j$, we write $g^{-1}_{ij} A_j$ to indicate the corresponding contravariant vector. 

To emphasize the use of the inverse metric $g^{-1}_{ij}$ explicitly and eliminate index-raising ambiguity, we rewrite the expression in a manifestly non-covariant form:
\begin{align}
* (dx^{i_1} \wedge \cdots \wedge dx^{i_p}) 
= \frac{1}{q!} \, g^{-1}_{i_1 k_1} \cdots g^{-1}_{i_p k_p} \, \sqrt{|g|} \, \varepsilon_{j_1 \cdots j_q k_1 \cdots k_p} \, dx^{j_1} \wedge \cdots \wedge dx^{j_q}.
\end{align}
This formulation treats all indices with equal weight and avoids the ambiguity of index position (upstairs vs. downstairs), which is especially helpful when working in constrained coordinate systems or when interpreting the differential forms geometrically.

Our objective is to compute the Hodge dual of the following $p$-form:
\begin{align}
* (dx^{i_1} \wedge \cdots \wedge dx^{i_p}),
\end{align}
in the presence of a constraint and a general (not necessarily flat) background metric
\begin{align}
ds^2 = g_{ij} \, dx^i dx^j,
\end{align}
subject to the condition
\begin{align}
x^i x^i = 1.
\end{align}
The constraint implies
\begin{align}
x^i dx^i = 0,
\end{align}
which means that among the $D$ coordinate one-forms $dx^i$, only $D-1$ are linearly independent. Consequently, the exterior algebra is effectively restricted to a $(D-1)$-dimensional hypersurface embedded in $\mathbb{R}^D$, and the Hodge dual of a $p$-form on this hypersurface becomes a $(q-1)$-form, with
\begin{align}
D = p + q.
\end{align}

Furthermore, the usual identity for the double Hodge dual operation still applies in this restricted setting:
\begin{align}
** \omega_{(p)} = (-1)^{p(D-1 - p) + t} \, \omega_{(p)},
\end{align}
where $t$ denotes the number of timelike directions in the ambient metric.

With these two observations in place, we proceed to construct an explicit ansatz for the Hodge dual on the constrained hypersurface.

We now propose the following ansatz for the Hodge dual of a $p$-form in the presence of the constraint $x^i x^i = 1$:
\begin{align}
* (dx^{i_1} \wedge \cdots \wedge dx^{i_p}) = \frac{1}{(q-1)!} \, \sqrt{|g|} \, \varepsilon_{j j_1 \cdots j_{q-1} k_1 \cdots k_p}
\, V_j \, g^{-1}_{i_1 k_1} \cdots g^{-1}_{i_p k_p} \,
dx^{j_1} \wedge \cdots \wedge dx^{j_{q-1}}. \label{ansatz}
\end{align}
Here, $|g| = (-1)^t \det g$ accounts for the signature of the metric, and $V_j$ is a vector field introduced to contract the extra free index from the Levi-Civita symbol, reflecting the fact that we are working in a $(D-1)$-dimensional constrained subspace. The ansatz above satisfies two essential properties:
\begin{enumerate}
    \item It is totally antisymmetric in the indices $(i_1 \cdots i_p)$, as guaranteed by the structure of the Levi-Civita symbol.
    \item It yields a $(q-1)$-form, as appropriate for Hodge duality on the constrained $(D-1)$-dimensional manifold.
\end{enumerate}

\vspace{0.5em}
\noindent
To determine the explicit form of $V_j$, we apply the Hodge dual operation twice. Starting from our ansatz, we compute
\begin{align}
* (dx^{i_1} \wedge \cdots \wedge dx^{i_p}) 
=& \frac{1}{(q-1)!} \sqrt{|g|} \, \varepsilon_{j j_1 \cdots j_{q-1} k_1 \cdots k_p} \,
V_j \, g^{-1}_{i_1 k_1} \cdots g^{-1}_{i_p k_p} \,
* (dx^{j_1} \wedge \cdots \wedge dx^{j_{q-1}})
\nn\\
=& \frac{1}{(q-1)!} \sqrt{|g|} \, \varepsilon_{j j_1 \cdots j_{q-1} k_1 \cdots k_p} \,
V_j \, g^{-1}_{i_1 k_1} \cdots g^{-1}_{i_p k_p} \,
\nn\\
&\quad \times \frac{1}{p!} \sqrt{|g|} \, \varepsilon_{r r_1 \cdots r_p s_1 \cdots s_{q-1}} \,
V_r \, g^{-1}_{j_1 s_1} \cdots g^{-1}_{j_{q-1} s_{q-1}} \,
dx^{r_1} \wedge \cdots \wedge dx^{r_p}. \label{**1}
\end{align}
This sets the stage for determining the precise form of $V_j$ by demanding consistency with the known identity for the double Hodge dual on a $(D-1)$-dimensional manifold. We will now analyze this expression further to constrain $V_j$.

Note that we now have two sets of indices: $(i,j,k)$ and $(j,r,s)$. Thus, the expression in \eqref{**1} becomes
\begin{align}
* * (dx^{i_1} \wedge \cdots \wedge dx^{i_p}) 
=& \frac{1}{p!} \frac{1}{(q-1)!} (-1)^t \det g \, (-1)^{p(q-1)}
\nn\\
&\times \varepsilon_{j j_1 \cdots j_{q-1} k_1 \cdots k_p} \varepsilon_{r s_1 \cdots s_{q-1} r_1 \cdots r_p} V_j V_r
\nn\\
&\times g^{-1}_{j_1 s_1} \cdots g^{-1}_{j_{q-1} s_{q-1}} \,
g^{-1}_{k_1 i_1} \cdots g^{-1}_{k_p i_p} \,
dx^{r_1} \wedge \cdots \wedge dx^{r_p}. \label{**2}
\end{align}
On the other hand, the standard identity for the double Hodge dual on a $(D-1)$-dimensional manifold with $t$ time-like directions is
\begin{align}
* * (dx^{i_1} \wedge \cdots \wedge dx^{i_p}) 
=& (-1)^{p(q-1) + t} dx^{i_1} \wedge \cdots \wedge dx^{i_p} \nn\\
=& (-1)^{p(q-1)+t} \delta_{i_1 \cdots i_p}^{r_1 \cdots r_p} dx^{r_1} \wedge \cdots \wedge dx^{r_p}. \label{**3}
\end{align}
Comparing \eqref{**2} and \eqref{**3}, we require that the prefactors multiplying $dx^{r_1} \wedge \cdots \wedge dx^{r_p}$ match. This yields the identity:
\begin{align}
\frac{1}{p!} \frac{1}{(q-1)!} & \det g \,
\varepsilon_{j j_1 \cdots j_{q-1} k_1 \cdots k_p} \,
\varepsilon_{r s_1 \cdots s_{q-1} r_1 \cdots r_p} \,
V_j V_r \,
g^{-1}_{j_1 s_1} \cdots g^{-1}_{j_{q-1} s_{q-1}} \,
g^{-1}_{k_1 i_1} \cdots g^{-1}_{k_p i_p}
= \delta_{i_1 \cdots i_p}^{r_1 \cdots r_p}. \label{**4}
\end{align}
To further constrain the structure, we contract both sides of \eqref{**4} with $\delta^{i_1 \cdots i_p}_{r_1 \cdots r_p}$, yielding
\begin{align}
\frac{1}{p!} \frac{1}{(q-1)!} & \det g \,
\varepsilon_{j j_1 \cdots j_{q-1} k_1 \cdots k_p} \,
\varepsilon_{r s_1 \cdots s_{q-1} i_1 \cdots i_p} \,
V_j V_r \,
g^{-1}_{j_1 s_1} \cdots g^{-1}_{j_{q-1} s_{q-1}} \,
g^{-1}_{k_1 i_1} \cdots g^{-1}_{k_p i_p} \nn\\
&= \binom{D-1}{p}. \label{**5}
\end{align}
There is an important caveat to observe here. If there is no restriction on the set $\{i_1, \cdots, i_p\}$, then we have the identity
\begin{align}
\delta^{i_1 \cdots i_p}_{i_1 \cdots i_p} = \binom{D}{p},
\end{align}
as one would be choosing $p$ indices out of $D$ total possibilities. However, due to the presence of the vector index $r$ in the ansatz, the indices $\{i_1, \cdots, i_p\}$ are constrained to lie within the subspace orthogonal to the constraint direction. That is, the index $r$ already occupies one of the $D$ slots, effectively reducing the available directions to $D-1$. Consequently, the correct counting becomes
\begin{align}
\delta^{i_1 \cdots i_p}_{i_1 \cdots i_p} = \binom{D-1}{p}.
\end{align}
The first line of \eqref{**5} can now be simplified using the following identity:
\begin{align}
g_{jr}= \frac{1}{\det g^{-1} (D-1)!}\,\varepsilon_{j j_1 \cdots j_{q-1} k_1 \cdots k_p} \varepsilon_{r s_1 \cdots s_{q-1} i_1 \cdots i_p} 
g^{-1}_{j_1 s_1} \cdots g^{-1}_{j_{q-1} s_{q-1}} g^{-1}_{k_1 i_1} \cdots g^{-1}_{k_p i_p}. \label{**6}
\end{align}
Inserting the result from \eqref{**6} into \eqref{**5}, we arrive at the identity:
\begin{align}
\frac{1}{p!} \frac{1}{(q-1)!}  \det g \, \det g^{-1} (D-1)! \, g_{jr} V_j V_r =
\binom{D-1}{p}. \label{**7}
\end{align}
This leads to a remarkably simple constraint on the vector $V_j$:
\begin{align}
\boxed{g_{ij} V_i V_j = 1}\,. \label{**8}
\end{align}
This result provides the necessary condition that ensures the consistency of our ansatz for the Hodge dual in the presence of a constraint such as $x^i x^i = 1$.

\subsection{\texorpdfstring{Finding an ansatz for $V_i$}{}}

A natural first ansatz is to take
\begin{align}
V^i = \Delta \, x^i,
\end{align}
where the normalization factor $\Delta$ is determined by the constraint
\begin{align}
g_{ij} V^i V^j = \Delta^2 \, g_{ij} x^i x^j = 1 
\quad \Rightarrow \quad 
\Delta = \frac{1}{\sqrt{g_{ij} x^i x^j}}\,.
\end{align}
At first glance, this choice appears reasonable and consistent with the normalization condition \eqref{**8}. However, there is a crucial subtlety to consider.

Throughout this section, we have not explicitly used the constraint $x^i x^i = 1$, except in determining the form degree of the Hodge dual, which becomes a $(D-1-p)$-form rather than a $(D-p)$-form. Nevertheless, the constraint has further implications.

A direct consequence of the identity $x^i x^i = 1$ is the differential relation $x^i dx^i = 0$. Namely,
\begin{align}
x^\ell * (dx^{i_1} \wedge \cdots \wedge dx^{i_p}) = 0,
\end{align}
whenever $\ell \in \{i_1, \cdots, i_p\}$, say $\ell = i_s$. Let us check whether this condition is satisfied by our ansatz \eqref{ansatz} with the choice of $V^i = \Delta \, x^i$. Substituting into the expression yields:
\begin{align}
x^{i_s} * (dx^{i_1} \wedge \cdots \wedge dx^{i_p}) 
&\propto x^{i_s} \varepsilon_{j j_1 \cdots j_{q-1} k_1 \cdots k_p}
\, V_j \, g^{-1}_{i_1 k_1} \cdots g^{-1}_{i_p k_p}
dx^{j_1} \wedge \cdots \wedge dx^{j_{q-1}}. \label{vanish}
\end{align}
Now, if we take $V^i = \Delta x^i$, then the expression becomes
\begin{align}
\propto \varepsilon_{j j_1 \cdots j_{q-1} k_1 \cdots k_p}
\, x^{i_s} x^j \, g^{-1}_{i_1 k_1} \cdots g^{-1}_{i_p k_p}
dx^{j_1} \wedge \cdots \wedge dx^{j_{q-1}}.
\end{align}
This clearly does not vanish automatically, due to lack of any anti-symmetry between between $x^{i_s}$ and $x^j$. Thus, the ansatz $V^i = \Delta x^i$ fails to satisfy this important orthogonality condition imposed by the constraint $x^i x^i = 1$.

This failure indicates that a more refined choice for $V^i$ is required—one that not only satisfies the normalization condition $g_{ij} V^i V^j = 1$, but also ensures that the contraction of $x^\ell$ with the Hodge dual vanishes when $\ell$ appears among the original form indices.

However, the resolution is straightforward. Let us assume  
\begin{align}
V^i = \kappa \, g^{-1}_{ij} x^j\,.
\end{align}
It is now easy to see that the right-hand side of \eqref{vanish} vanishes, as it becomes proportional to
\begin{align}
&\varepsilon_{j j_1 \cdots j_{q-1} k_1 \cdots k_p}
\, g^{-1}_{j k} x^k x^{i_s} \, g^{-1}_{i_1 k_1} \cdots g^{-1}_{i_s k_s} \cdots g^{-1}_{i_p k_p}
dx^{j_1} \wedge \cdots \wedge dx^{j_{q-1}} \nn\\
&= \varepsilon_{j j_1 \cdots j_{q-1} k_1 \cdots k_p}
\, \underbrace{g^{-1}_{j k} x^k\, g^{-1}_{k_s i_s} x^{i_s}}_{\text{symmetric under } j \leftrightarrow k_s}
\, g^{-1}_{i_1 k_1} \cdots g^{-1}_{i_p k_p}
dx^{j_1} \wedge \cdots \wedge dx^{j_{q-1}} = 0,
\end{align}
since the expression inside the contraction is symmetric under the exchange $j \leftrightarrow k_s$, while the Levi-Civita symbol is antisymmetric, hence the full term vanishes.

Finding $\kappa$ is straightforward from the normalization condition $g_{ij} V^i V^j = 1$:
\begin{align}
g_{ij} \kappa g^{-1}_{im} x^m \, \kappa g^{-1}_{jn} x^n 
= \kappa^2 \delta_{jm} g^{-1}_{jn} x^m x^n 
= \kappa^2 g^{-1}_{mn} x^m x^n = 1 
\quad \Rightarrow \quad 
\kappa = \frac{1}{\sqrt{g^{-1}_{mn} x^m x^n}}\,.
\end{align}
Therefore,
\begin{align}
V^i = \frac{1}{\sqrt{g^{-1}_{mn} x^m x^n}} \, g^{-1}_{ij} x^j\,.
\end{align}
Finally, the Hodge dual becomes:
\begin{align}
\boxed{
*(dx^{i_1} \wedge \cdots \wedge dx^{i_p}) 
= \frac{1}{(q-1)!} \sqrt{\frac{(-1)^t \det g}{g^{-1}_{mn} x^m x^n}} 
\, \varepsilon_{j j_1 \cdots j_{q-1} k_1 \cdots k_p} 
g^{-1}_{j k} x^k\, g^{-1}_{i_1 k_1} \cdots g^{-1}_{i_p k_p}
dx^{j_1} \wedge \cdots \wedge dx^{j_{q-1}} 
}\,. \label{ansatz.final}
\end{align}
\subsection{Some special results}

Let us consider the special case of $p=1$. We obtain:
\begin{align}
\boxed{
*dx^{i}  = \frac{1}{(D-2)!} \sqrt{\frac{(-1)^t \det g}{g^{-1}_{mn} x^m x^n}}\, g^{-1}_{i k} g^{-1}_{j \ell} x^\ell\, \varepsilon_{j j_1 \cdots j_{D-2} k} \,
dx^{j_1} \wedge \cdots \wedge dx^{j_{D-2}}
}\,. \label{p=1}
\end{align}
Now, let us specialize to the case $D=3$:
\begin{align}
*dx^{i} = \sqrt{\frac{(-1)^t \det g}{g^{-1}_{mn} x^m x^n}}\, g^{-1}_{i k} g^{-1}_{j \ell} x^\ell\, \varepsilon_{j r k} \, dx^{r}
\,. \label{D=3,p=1}
\end{align}
In the further special case where the metric is diagonal, i.e., $g_{ij} = 0$ for $i \neq j$, we have:
\begin{align}
g^{-1}_{11} = \frac{1}{g_{11}}, \quad 
g^{-1}_{22} = \frac{1}{g_{22}}, \quad 
g^{-1}_{33} = \frac{1}{g_{33}}, \quad 
\det g = g_{11} g_{22} g_{33},
\end{align}
and
\begin{align}
\sqrt{\frac{\det g}{g^{-1}_{mn} x^m x^n}} 
= \frac{g_{11} g_{22} g_{33}}{\sqrt{x^2 g_{22} g_{33} + y^2 g_{11} g_{33} + z^2 g_{11} g_{22}}}.
\end{align}

Assuming a Euclidean signature ($t=0$), the explicit expressions for the Hodge duals become:
\begin{align}
*dx &= *dx^1 = \frac{y g_{33} dz - z g_{22} dy}{\sqrt{x^2 g_{22} g_{33} + y^2 g_{11} g_{33} + z^2 g_{11} g_{22}}}, \nn\\
*dy &= *dx^2 = \frac{z g_{11} dx - x g_{33} dz}{\sqrt{x^2 g_{22} g_{33} + y^2 g_{11} g_{33} + z^2 g_{11} g_{22}}}, \nn\\
*dz &= *dx^3 = \frac{x g_{22} dy - y g_{11} dx}{\sqrt{x^2 g_{22} g_{33} + y^2 g_{11} g_{33} + z^2 g_{11} g_{22}}}.
\end{align}
These expressions exactly match those derived previously in equation~\eqref{D=3explicit}.

\subsection{Finding the volume form}

Consider the special case of $p = 0$, for which the Hodge dual of a scalar (zero-form) is the volume form. From our general ansatz, we obtain
\begin{align}
*\oneone 
=& \frac{1}{(D-1)!} \sqrt{\frac{(-1)^t \det g}{g^{-1}_{mn} x^m x^n}} 
\, \varepsilon_{j j_1 \cdots j_{D-1}} 
g^{-1}_{j k} x^k\, 
dx^{j_1} \wedge \cdots \wedge dx^{j_{D-1}}. \label{volform1}
\end{align}
Multiplying both sides by $x_\ell x_\ell = 1$ allows us to exploit the identity
\begin{align}
\varepsilon_{j j_1 \cdots j_{D-1}} x_\ell 
= \sum_{r=1}^{D-1} \varepsilon_{j \cdots \ell \cdots j_{D-1}} x_{j_r}
+ \varepsilon_{\ell j_1 \cdots j_{D-1}} x_j,
\end{align}
which follows from the Schouten identity:
\begin{align}
\varepsilon_{[\mu_1 \cdots \mu_D} A_{\mu]} = 0
\quad \Rightarrow \quad
\varepsilon_{\mu_1 \cdots \mu_D} A_{\mu} 
= \sum_{r=1}^{D} \varepsilon_{\mu_1 \cdots \mu_{r-1} \mu \mu_{r+1} \cdots \mu_D} A_{\mu_r}.
\end{align}
Hence we have
\begin{align}
*\oneone 
=& \frac{1}{(D-1)!} \sqrt{\frac{(-1)^t \det g}{g^{-1}_{mn} x^m x^n}} 
\, \Big(\sum_{r=1}^{D-1} \varepsilon_{j \cdots \ell \cdots j_{D-1}} x_{j_r}
+ \varepsilon_{\ell j_1 \cdots j_{D-1}} x_j \Big)
g^{-1}_{j k} x^k\, x_\ell\,
dx^{j_1} \wedge \cdots \wedge dx^{j_{D-1}}. \nn
\end{align}
Because $x^{j_r} dx^{j_r} = 0$ due to the constraint $x^i dx^i = 0$, all terms involving $x_{j_r} dx^{j_r}$ vanish, leaving only the last term in the parenthesis above. Therefore, the volume form becomes
\begin{align}
*\oneone
=& \frac{1}{(D-1)!} \sqrt{\frac{(-1)^t \det g}{g^{-1}_{mn} x^m x^n}} 
\, \varepsilon_{\ell j_1 \cdots j_{D-1}}\, \, x^\ell\,x^j g^{-1}_{j k} x^k
dx^{j_1} \wedge \cdots \wedge dx^{j_{D-1}}. \label{volform2}
\end{align}
Hence, the volume form on the unit sphere embedded in a $D$-dimensional space becomes
\begin{align}
*\oneone
= \frac{1}{(D-1)!} \sqrt{(-1)^t \det g \, g^{-1}_{mn} x^m x^n} 
\, \varepsilon_{i i_1 \cdots i_{D-1}}\, x^i 
dx^{i_1} \wedge \cdots \wedge dx^{i_{D-1}}.
\end{align}

Alternatively, in a fully covariant language using the Levi-Civita tensor and explicit metric notation, we can write
\begin{align}
\boxed{
*\oneone 
= \frac{1}{(D-1)!} \sqrt{g^{\rho \sigma} x^\rho x^\sigma} 
\, \epsilon_{\mu \mu_1 \cdots \mu_{D-1}}\, x^\mu 
dx^{\mu_1} \wedge \cdots \wedge dx^{\mu_{D-1}}.
}
\end{align}
This gives a compact and manifestly covariant expression for the induced volume form on the constrained hypersurface $x^i x^i = 1$.

\section{Conclusion} \label{con}

In this work, we have derived a general expression for the Hodge dual of an arbitrary $p$-form in a $D$-dimensional manifold, subject to the constraint $x^i x^i = 1$, as naturally arises in spherical compactifications. Starting from a systematic review of conventional Hodge duality, we built the necessary machinery to extend the formalism to settings in which the differential forms are not fully independent due to the presence of embedding constraints.

By constructing a consistent ansatz for the constrained Hodge dual and demanding compatibility with the fundamental identity $\ast \ast = \pm 1$ on $p$-forms, we identified a normalized radial vector field $V^i$ that governs the projection from the ambient space to the constrained submanifold. This geometric insight allowed us to reinterpret the Hodge dual operation in the presence of a constraint as a projection of the ambient dual onto the tangential directions of the sphere.

Our formalism correctly reproduces known results in low-dimensional cases, such as three-dimensional Euclidean space, and generalizes naturally to arbitrary diagonal or non-diagonal metrics. The final expression in equation~\eqref{ansatz.final} provides a compact and universal formula that captures the essential features of constrained Hodge duality and can be applied in a wide range of contexts.

Beyond its mathematical utility, this result has significant implications for compactified theories in high-energy physics, particularly in supergravity, string theory, and M-theory. The ability to explicitly compute Hodge duals in constrained geometries facilitates the construction of consistent truncations, the formulation of flux quantization conditions, and the derivation of effective lower-dimensional actions. It may also shed light on the structure of topological terms, duality relations, and Chern-Simons couplings in reduced theories.

Moreover, our derivation clarifies longstanding technical issues encountered in consistent Kaluza-Klein reductions on spheres and offers a general formalism that complements and extends the results found in prior literature such as~\cite{Cvetic:1999au,Cvetic:2000ah,Cvetic:2000tb,Cvetic:2003jy,Gibbons:2003gp,Azizi-Pope}. The connection between the radial vector $V^i$ and the constrained Hodge dual provides a geometric mechanism that may prove beneficial in applications involving constrained brane embeddings, flux stabilization, and the study of effective potential landscapes.

Several directions for future research are worth pursuing. These include generalizing the construction to more intricate embedding constraints, such as those found in squashed spheres or non-spherical coset spaces, as well as to curved internal manifolds with torsion or warping. Another promising direction is the extension to generalized geometry and double field theory, where analogous constraints appear in the context of dual coordinates. Finally, applications of this formalism to nontrivial compactifications involving fluxes, topological couplings, and moduli stabilization merit detailed investigation.

\section*{Acknowledgments}

I am grateful to Chris Pope for illuminating discussions and insights. This work was supported by the Robert A. Welch Foundation (Grant No. A-1261) and the National Science Foundation (Grant No. PHY-2013771).

\appendix 

\section{Some useful theorems}   \label{app}

\noindent\textbf{Generalized Kronecker delta.}  
The generalized Kronecker delta is an extension of the usual Kronecker delta $\delta_i^j$, and is defined by  
\begin{align}  
\delta^{j_1 \cdots j_D}_{i_1 \cdots i_D} 
= \delta^{[j_1}_{[i_1} \cdots \delta^{j_D]}_{i_D]}  
= \delta^{[j_1}_{i_1} \cdots \delta^{j_D]}_{i_D},  
\end{align}  
where square brackets denote antisymmetrization with unit weight, i.e.,  
\begin{align}  
A_{[ij]} = \frac{1}{2}(A_{ij} - A_{ji}).  
\end{align}  

\noindent\textbf{Lemma.}  
Let $A_{i_1 \cdots i_D} = A_{[i_1 \cdots i_D]}$ be a totally antisymmetric tensor. Then,  
\begin{align}  
\delta^{j_1 \cdots j_D}_{i_1 \cdots i_D} A^{i_1 \cdots i_D} = A^{j_1 \cdots j_D}.  
\end{align}  
This identity follows directly from the definition of the generalized Kronecker delta and the antisymmetry of $A$.

\begin{theorem}
\begin{align}
\delta_{i_1 \cdots i_p}^{i_1 \cdots i_p} = \binom{D}{p}
\end{align}
\end{theorem}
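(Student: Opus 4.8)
The plan is to evaluate the fully contracted generalized Kronecker delta by reducing it to a counting problem over the distinct index tuples that survive the antisymmetrization. The decisive observation, following directly from the definition of the generalized delta as a unit-weight antisymmetrized product of ordinary Kronecker deltas, is that $\delta^{i_1 \cdots i_p}_{i_1 \cdots i_p}$ receives a nonzero contribution only from tuples $(i_1, \ldots, i_p)$ whose $p$ entries are pairwise distinct; any repeated index is annihilated by the built-in antisymmetry. The whole argument then splits into evaluating one diagonal entry and counting how many such entries there are.

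First I would expand a single diagonal entry, with no summation yet on the $i$'s:
\begin{align}
\delta^{i_1 \cdots i_p}_{i_1 \cdots i_p}
= \frac{1}{p!}\sum_{\sigma \in S_p}\sgn(\sigma)\,
\delta^{i_{\sigma(1)}}_{i_1}\cdots\delta^{i_{\sigma(p)}}_{i_p}.
\end{align}
For a fixed tuple of pairwise distinct indices, the product $\delta^{i_{\sigma(1)}}_{i_1}\cdots\delta^{i_{\sigma(p)}}_{i_p}$ is nonzero precisely when $i_{\sigma(k)} = i_k$ for every $k$, which forces $\sigma$ to be the identity permutation; hence the diagonal value of each admissible tuple is exactly $1/p!$.

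Next I would perform the summation over the repeated indices. The number of ordered $p$-tuples of pairwise distinct labels drawn from the $D$ coordinate directions is the falling factorial $D(D-1)\cdots(D-p+1) = D!/(D-p)!$, and each contributes $1/p!$ to the trace. Therefore
\begin{align}
\delta^{i_1 \cdots i_p}_{i_1 \cdots i_p}
= \frac{1}{p!}\cdot\frac{D!}{(D-p)!} = \binom{D}{p},
\end{align}
which is the claim. As an independent consistency check one may instead establish the single-pair contraction identity and iterate it, noting that the resulting recursion $T_p = \tfrac{D-p+1}{p}\,T_{p-1}$ with $T_0 = 1$ reproduces the same binomial coefficient.

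The step demanding the most care is bookkeeping the unit-weight normalization $1/p!$ that is built into the definition adopted in the appendix. A naive tally of distinct tuples alone yields the falling factorial $D!/(D-p)!$, and it is precisely the $1/p!$ arising from the antisymmetrization that collapses this into $\binom{D}{p}$. Dropping or misplacing that factor is the one point where the argument can silently return the unnormalized answer $D!/(D-p)!$ instead of the stated $\binom{D}{p}$, so I would be explicit about where the normalization enters.
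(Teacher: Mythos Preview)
Your proof is correct and follows the same counting strategy as the paper: reduce the trace to a count of admissible index tuples, each weighted by the diagonal value of the generalized delta. The paper phrases the count over unordered index \emph{sets} (each contributing $1$) while you count ordered tuples (each contributing $1/p!$); these are equivalent bookkeepings, and your explicit tracking of the $1/p!$ normalization is in fact more careful than the paper's terse statement.
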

\begin{proof}
The key observation is that for a fixed set of indices,
\begin{align}
\delta_{i_1 \cdots i_p}^{i_1 \cdots i_p} = 1,
\end{align}
where \emph{no summation} is implied. The number of distinct antisymmetric combinations of $p$ indices from a set of $D$ is simply the number of $p$-element subsets of a $D$-element set, which is $\binom{D}{p}$. This identity generalizes the well-known result $\delta_{ii} = D$. In particular, when $p = D$, one finds
\begin{align}
\delta_{i_1 \cdots i_D}^{i_1 \cdots i_D} = 1.
\end{align}
\end{proof}
\begin{theorem}
\begin{align}
\varepsilon_{i_1 \cdots i_D} \varepsilon_{j_1 \cdots j_D} = D! \, \delta^{j_1 \cdots j_D}_{i_1 \cdots i_D}
\end{align}
\end{theorem}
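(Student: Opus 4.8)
The plan is to exploit the fact that both sides are tensors carrying $D$ lower indices and $D$ upper indices over a $D$-valued index range, each totally antisymmetric within the two index groups, and to fix the single free constant using the counting identity of Theorem~1. First I would record the symmetry structure of each side. The left-hand side $\varepsilon_{i_1\cdots i_D}\varepsilon_{j_1\cdots j_D}$ is manifestly totally antisymmetric under permutations of $(i_1,\dots,i_D)$ and, independently, under permutations of $(j_1,\dots,j_D)$, since each Levi-Civita symbol is. The right-hand side $D!\,\delta^{j_1\cdots j_D}_{i_1\cdots i_D}$ inherits exactly the same bi-antisymmetry from the antisymmetrized definition $\delta^{j_1\cdots j_D}_{i_1\cdots i_D}=\delta^{[j_1}_{i_1}\cdots\delta^{j_D]}_{i_D}$ recorded above.

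Next I would argue that any object with these two antisymmetry properties over a $D$-dimensional index set is determined by a single number. Antisymmetry forces it to vanish unless both $(i_1,\dots,i_D)$ and $(j_1,\dots,j_D)$ are permutations of $(0,1,\dots,D-1)$; and for permutation arguments $i_a=\pi(a)$, $j_a=\rho(a)$ the value equals $\sgn(\pi)\,\sgn(\rho)$ times the value at the reference configuration where both groups are $(0,1,\dots,D-1)$. Both sides obey precisely this transformation law, so they can differ by at most an overall constant $c$; that is, it suffices to establish $\varepsilon_{i_1\cdots i_D}\varepsilon_{j_1\cdots j_D}=c\,\delta^{j_1\cdots j_D}_{i_1\cdots i_D}$ and then determine $c$.

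To pin down $c$ I would fully contract, setting $j_a=i_a$ and summing over all indices. On the left this gives $\varepsilon_{i_1\cdots i_D}\varepsilon_{i_1\cdots i_D}=\sum_{\pi\in S_D}\sgn(\pi)^2=D!$, since the only nonzero terms come from the $D!$ permutations and each contributes $(\pm1)^2=1$. On the right the same contraction produces $c\,\delta^{i_1\cdots i_D}_{i_1\cdots i_D}=c\binom{D}{D}=c$, invoking Theorem~1 with $p=D$. Equating the two yields $c=D!$, which is exactly the claimed identity.

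The step demanding the most care is the bookkeeping of the antisymmetrization weight. Because the generalized Kronecker delta is defined here with unit-weight antisymmetrization, so that a single ordered tuple of distinct indices yields the diagonal value $1/D!$, the proportionality constant comes out to be $D!$ rather than $1$; had one instead adopted the determinant normalization $\det(\delta^{j_a}_{i_b})$, the factor would disappear. The consistency check that makes this transparent is precisely that Theorem~1 gives $\delta^{i_1\cdots i_D}_{i_1\cdots i_D}=\binom{D}{D}=1$ after summation in the \emph{same} convention, so the contraction argument closes without any stray factor.
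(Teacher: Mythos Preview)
Your proof is correct and follows essentially the same approach as the paper: both arguments invoke the total antisymmetry in each index group to conclude that the two sides are proportional, and then fix the constant by a full contraction together with Theorem~1 (yielding $\binom{D}{D}=1$ on the right and $D!$ on the left). The only cosmetic difference is that the paper contracts with the generalized Kronecker delta $\delta^{i_1\cdots i_D}_{j_1\cdots j_D}$ while you contract with ordinary Kronecker deltas by setting $j_a=i_a$; since both sides are already antisymmetric these are equivalent operations, and your added remark on the unit-weight normalization is a welcome clarification.
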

\begin{proof}
Both sides are totally antisymmetric under permutations of the indices $i_1, \ldots, i_D$ and $j_1, \ldots, j_D$, so they must be proportional. To determine the proportionality constant, contract both sides with $\delta_{j_1 \cdots j_D}^{i_1 \cdots i_D}$:
\begin{align*}
\text{RHS} &= \delta_{j_1 \cdots j_D}^{i_1 \cdots i_D} \delta^{j_1 \cdots j_D}_{i_1 \cdots i_D}
= \delta^{i_1 \cdots i_D}_{i_1 \cdots i_D} = \binom{D}{D} = 1, \\\\
\text{LHS} &= \varepsilon_{i_1 \cdots i_D} \varepsilon_{j_1 \cdots j_D} \delta_{j_1 \cdots j_D}^{i_1 \cdots i_D}
= \varepsilon_{i_1 \cdots i_D} \varepsilon^{i_1 \cdots i_D} = D!\,,
\end{align*}
where we used the identity $\varepsilon_{i_1 \cdots i_D} \varepsilon^{i_1 \cdots i_D} = D!$\,. Thus, the proportionality constant is $D!$, which completes the proof.
\end{proof}

\begin{theorem}
\begin{align}
\boxed{\quad
\delta^{i_1 \cdots i_p k_1 \cdots k_q}_{i_1 \cdots i_p j_1 \cdots j_q}  = \frac{p! \, q!}{D!} \, \delta^{k_1 \cdots k_q}_{j_1 \cdots j_q}
\quad}
\end{align}
\end{theorem}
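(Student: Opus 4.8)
The goal is to prove the contraction identity
\[
\delta^{i_1 \cdots i_p k_1 \cdots k_q}_{i_1 \cdots i_p j_1 \cdots j_q} = \frac{p!\,q!}{D!}\,\delta^{k_1 \cdots k_q}_{j_1 \cdots j_q},
\]
where the first $p$ upper and lower indices are summed over.

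The plan is to exploit the two identities already established in the preceding theorems: $\delta_{i_1 \cdots i_p}^{i_1 \cdots i_p} = \binom{D}{p}$ (Theorem 1) and the product formula $\varepsilon_{i_1 \cdots i_D}\varepsilon_{j_1 \cdots j_D} = D!\,\delta^{j_1 \cdots j_D}_{i_1 \cdots i_D}$ (Theorem 2). First I would establish the statement for the extreme case $p + q = D$, i.e.\ the full contraction where $q = D - p$. In that situation the left-hand side is a complete contraction of a generalized Kronecker delta with $D$ index pairs over the first $p$ of them, and one can recognize it directly as $\frac{p!\,q!}{D!}\,\delta^{k_1 \cdots k_q}_{j_1 \cdots j_q}$ by rewriting the order-$D$ delta as a product of two Levi-Civita symbols via Theorem 2 and then contracting $p$ of the indices, using $\varepsilon_{i_1 \cdots i_p a_1 \cdots a_q}\varepsilon_{i_1 \cdots i_p b_1 \cdots b_q} = p!\,(D-p)!\,\delta^{a_1 \cdots a_q}_{b_1 \cdots b_q}$, a standard partial-contraction formula for Levi-Civita symbols.

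The cleaner route, which avoids the $p+q=D$ restriction, is an induction on the number of contracted indices $p$. The base case $p=0$ is the trivial identity $\delta^{k_1 \cdots k_q}_{j_1 \cdots j_q} = \delta^{k_1 \cdots k_q}_{j_1 \cdots j_q}$. For the inductive step it suffices to prove the single-index contraction rule
\[
\delta^{i\, k_1 \cdots k_q}_{i\, j_1 \cdots j_q} = (D-q)\,\delta^{k_1 \cdots k_q}_{j_1 \cdots j_q},
\]
and then iterate it $p$ times. Expanding the generalized Kronecker delta on the left as an antisymmetrized product of ordinary Kronecker deltas, the term in which the summed index $i$ pairs with itself contributes the factor $\delta_i^i = D$, while the $q$ terms in which $i$ pairs with one of the $j$'s (equivalently one of the $k$'s) each contribute $-1$ after using $\delta_i^{k_a}\delta_{j_b}^i = \delta_{j_b}^{k_a}$ and absorbing the result back into a rank-$q$ delta; combining gives the coefficient $D - q$.

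Iterating the single-index rule from $p$ contracted indices down to $q$ free pairs yields the product
\[
\delta^{i_1 \cdots i_p k_1 \cdots k_q}_{i_1 \cdots i_p j_1 \cdots j_q} = \prod_{m=0}^{p-1}\bigl(D-q-m\bigr)\,\delta^{k_1 \cdots k_q}_{j_1 \cdots j_q} = \frac{(D-q)!}{(D-q-p)!}\,\delta^{k_1 \cdots k_q}_{j_1 \cdots j_q},
\]
where at stage $m$ the rank of the delta is $q+p-m$, so the single-index rule contributes the factor $D-(q+p-m-1) = D-q-(p-m-1)$, and collecting these over $m$ gives the falling factorial. Since $D = p+q$, one has $(D-q)! = p!$ and $(D-q-p)! = 0! = 1$; more care is needed for general $D$, but using $D=p+q$ this product equals $p!$, and comparison with the target coefficient $p!\,q!/D!$ requires only the bookkeeping identity $(D-q)!/(D-q-p)! = p!\,q!/D!$ when $D=p+q$, which reduces to $p! = p!\,q!/(p+q)! \cdot (p+q)!/q!$, an identity. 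The main obstacle is getting the combinatorial coefficient exactly right: one must be careful about the normalization convention (unit-weight antisymmetrization) used in defining the generalized Kronecker delta, since an off-by-a-factorial error in the single-index contraction rule propagates through the induction. I would pin this down by checking the single-index step against Theorem 1 in the fully contracted case $q=0$, where the formula must reproduce $\delta^{i_1 \cdots i_p}_{i_1 \cdots i_p} = \binom{D}{p} = D!/(p!\,(D-p)!)$, providing a consistency check on the normalization before trusting the general result.
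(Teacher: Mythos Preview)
Your inductive strategy via repeated single-index contraction is a legitimate route, and it is genuinely different from the paper's proof. The paper argues in one stroke: since the left side is totally antisymmetric in both the $j$'s and the $k$'s it must be proportional to $\delta^{k_1\cdots k_q}_{j_1\cdots j_q}$; contracting both sides with $\delta^{j_1\cdots j_q}_{k_1\cdots k_q}$ gives $\binom{D}{D}=1$ on the left (a full trace, by Theorem~1) and $\kappa\binom{D}{q}$ on the right, so $\kappa=p!\,q!/D!$. No induction, no single-index expansion.

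Your execution, however, contains a concrete error. The single-index rule you quote,
\[
\delta^{\,i\,k_1\cdots k_q}_{\,i\,j_1\cdots j_q}=(D-q)\,\delta^{k_1\cdots k_q}_{j_1\cdots j_q},
\]
holds for the \emph{determinant}-normalized generalized Kronecker delta, not for the paper's unit-weight convention $\delta^{a_1\cdots a_n}_{b_1\cdots b_n}=\delta^{[a_1}_{b_1}\cdots\delta^{a_n]}_{b_n}$, which carries an implicit $1/n!$. In the paper's convention the correct one-step contraction is
\[
\delta^{\,i\,k_1\cdots k_q}_{\,i\,j_1\cdots j_q}=\frac{D-q}{q+1}\,\delta^{k_1\cdots k_q}_{j_1\cdots j_q}.
\]
Your own proposed consistency check against Theorem~1 would have flagged this: your rule gives the full trace as $\prod_{m=0}^{p-1}(D-m)=D!/(D-p)!$, not $\binom{D}{p}$. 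The error then surfaces in your ``bookkeeping identity'' $(D-q)!/(D-q-p)!=p!\,q!/D!$, which is false for $D=p+q$ (the left side is $p!$, the right side is $1/\binom{D}{p}$); the chain of equalities you wrote after it collapses to the tautology $p!=p!$ and proves nothing. With the corrected single-index rule the iterated product becomes
\[
\prod_{m=1}^{p}\frac{m}{q+m}=\frac{p!\,q!}{(p+q)!}=\frac{p!\,q!}{D!},
\]
and the induction goes through cleanly.
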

\begin{proof}
Left-hand side is totally antisymmetric in the sets $(j_1, \ldots, j_q)$ and $(k_1, \ldots, k_q)$, so the right-hand side must be proportional to $\delta^{k_1 \cdots k_q}_{j_1 \cdots j_q}$. To determine the proportionality constant, contract both sides with $\delta^{j_1 \cdots j_q}_{k_1 \cdots k_q}$.

\noindent\textbf{Left-hand side:}
\begin{align}
\delta^{i_1 \cdots i_p k_1 \cdots k_q}_{i_1 \cdots i_p j_1 \cdots j_q} \, \delta^{j_1 \cdots j_q}_{k_1 \cdots k_q}
&= \delta^{i_1 \cdots i_p j_1 \cdots j_q}_{i_1 \cdots i_p j_1 \cdots j_q} = \binom{D}{D} = 1.
\end{align}
\noindent\textbf{Right-hand side:}
\begin{align}
\left( \frac{p! \, q!}{D!} \right) \delta^{k_1 \cdots k_q}_{j_1 \cdots j_q} \, \delta^{j_1 \cdots j_q}_{k_1 \cdots k_q}
&= \left( \frac{p! \, q!}{D!} \right) \delta^{j_1 \cdots j_q}_{j_1 \cdots j_q} = \left( \frac{p! \, q!}{D!} \right) \binom{D}{q}.
\end{align}
Using the identity $\binom{D}{q} = \frac{D!}{p! \, q!}$, the right-hand side simplifies to 1, matching the left-hand side. This confirms that the proportionality constant is indeed $\frac{p! \, q!}{D!}$, completing the proof.
\end{proof}

\begin{theorem} 
Let $D = p + q$. Then,
\begin{align}
\boxed{\quad 
\varepsilon_{i_1 \cdots i_p j_1 \cdots j_q} \, \varepsilon^{i_1 \cdots i_p k_1 \cdots k_q} = p! \, q! \, \delta^{k_1 \cdots k_q}_{j_1 \cdots j_q}\,.
\quad }
\end{align}
\end{theorem}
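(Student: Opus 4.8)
The plan is to prove the identity
\begin{align}
\varepsilon_{i_1 \cdots i_p j_1 \cdots j_q} \, \varepsilon^{i_1 \cdots i_p k_1 \cdots k_q} = p! \, q! \, \delta^{k_1 \cdots k_q}_{j_1 \cdots j_q}
\end{align}
by reducing it to Theorems~2 and~3, which are already established. The essential observation is that the fully contracted product of two Levi-Civita symbols is, up to the signature sign, the same object as a product of two symbols whose \emph{lower} indices I can convert using the relation between $\varepsilon^{\cdots}$ and $\varepsilon_{\cdots}$. Since the contracted indices $i_1 \cdots i_p$ are summed over, the signs from raising them cancel in pairs, so I may freely treat this as a relation among ordinary Levi-Civita symbols (all indices down) and restore any overall sign at the end if needed; because each repeated index contributes a factor that squares away, the net result carries no leftover signature factor.

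First I would invoke Theorem~2 in the form $\varepsilon_{i_1 \cdots i_p j_1 \cdots j_q}\,\varepsilon_{i_1 \cdots i_p k_1 \cdots k_q} = D!\,\delta^{i_1 \cdots i_p k_1 \cdots k_q}_{i_1 \cdots i_p j_1 \cdots j_q}$, recognizing the left-hand side of the target as exactly this product with the index set split into the $p$ contracted slots and the $q$ free slots on each factor. Next I would apply Theorem~3 with the roles $p \leftrightarrow$ (number of contracted indices) and $q \leftrightarrow$ (number of free indices), which gives
\begin{align}
\delta^{i_1 \cdots i_p k_1 \cdots k_q}_{i_1 \cdots i_p j_1 \cdots j_q} = \frac{p!\,q!}{D!}\,\delta^{k_1 \cdots k_q}_{j_1 \cdots j_q}.
\end{align}
Multiplying by the $D!$ from Theorem~2, the $D!$ factors cancel and I land precisely on $p!\,q!\,\delta^{k_1 \cdots k_q}_{j_1 \cdots j_q}$, which is the claimed result.

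The main subtlety I expect is bookkeeping with the raised versus lowered indices: the statement is written with one symbol carrying upper indices, whereas Theorem~2 is stated for two symbols with lower indices. I would handle this by noting that $\varepsilon^{i_1 \cdots i_p k_1 \cdots k_q} = (-1)^t \varepsilon_{i_1 \cdots i_p k_1 \cdots k_q}$ from the convention established earlier in Section~\ref{sec2}, so the contracted product picks up a single overall factor $(-1)^t$. One must then check that the same signature factor is implicitly present on the right-hand side through the generalized Kronecker delta conventions, or equivalently absorb it consistently; since the generalized Kronecker delta is built purely from ordinary $\delta^j_i$ and is signature-independent, the cleanest route is to state the theorem as an identity among symbols (density-valued, signature-free) where the sign issue does not arise, matching how Theorems~2 and~3 are phrased. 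With that convention fixed, the proof is a two-line composition of the preceding theorems and requires no genuinely new computation.
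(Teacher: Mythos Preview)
Your proposal is correct and follows essentially the same route as the paper: rewrite the contracted product of Levi-Civita symbols as $D!$ times the generalized Kronecker delta via Theorem~2, then apply the partial-trace identity of Theorem~3 to reduce it to $p!\,q!\,\delta^{k_1\cdots k_q}_{j_1\cdots j_q}$. The paper's proof is terser about the raised-versus-lowered index issue you flag, simply treating the appendix identities as combinatorial relations among the symbols, but the logic is identical.
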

\begin{proof}
We observe that:
\begin{align}
\varepsilon_{i_1 \cdots i_p j_1 \cdots j_q} \, \varepsilon^{i_1 \cdots i_p k_1 \cdots k_q}
= D! \, \delta^{i_1 \cdots i_p k_1 \cdots k_q}_{i_1 \cdots i_p j_1 \cdots j_q},
\end{align}
where it follows from Theorem 3. Applying Theorem 4 to the generalized Kronecker delta gives:
\begin{align}
\delta^{i_1 \cdots i_p k_1 \cdots k_q}_{i_1 \cdots i_p j_1 \cdots j_q}
= \frac{p! \, q!}{D!} \, \delta^{k_1 \cdots k_q}_{j_1 \cdots j_q}.
\end{align}
Substituting this into the previous expression yields:
\begin{align}
\varepsilon_{i_1 \cdots i_p j_1 \cdots j_q} \, \varepsilon^{i_1 \cdots i_p k_1 \cdots k_q}
= p! \, q! \, \delta^{k_1 \cdots k_q}_{j_1 \cdots j_q},
\end{align}
which completes the proof. Note that Theorem 3 is a special case of Theorem 5 with $p = 0$ and $q = D$.
\end{proof}

\begin{theorem} 
Let $D = p + q$. Then,
\begin{align}
\boxed{\quad 
\varepsilon_{i_1 \cdots i_p j_1 \cdots j_q} \, \varepsilon_{m_1 \cdots m_p n_1 \cdots n_q}
\, A_{i_1 m_1} \cdots A_{i_p m_p} A_{j_1 k_1} \cdots A_{j_q k_q}
= p! \, q! \, \det A \, \delta_{n_1 \cdots n_q}^{k_1 \cdots k_q}.
\quad}
\label{det.A.delta}
\end{align}
\end{theorem}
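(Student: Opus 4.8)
The plan is to reduce the statement to two results already in this appendix: the elementary determinant expansion of the Levi-Civita symbol, and the contraction identity $\varepsilon_{i_1 \cdots i_p j_1 \cdots j_q}\,\varepsilon^{i_1 \cdots i_p k_1 \cdots k_q} = p!\,q!\,\delta^{k_1 \cdots k_q}_{j_1 \cdots j_q}$ established in the preceding theorem. The bridge between them is the classical identity
\[
\varepsilon_{a_1 \cdots a_D}\, A_{a_1 b_1} \cdots A_{a_D b_D} = \det A \;\, \varepsilon_{b_1 \cdots b_D},
\]
which absorbs all $D$ copies of the matrix $A$ into a single determinant factor while converting the summed index set $(a_1,\dots,a_D)$ of the first symbol into the free index set $(b_1,\dots,b_D)$.

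First I would establish this bridge identity. Its left-hand side is totally antisymmetric under permutations of the free indices $b_1,\dots,b_D$: interchanging $b_r \leftrightarrow b_s$ is undone by relabelling the summed indices $a_r \leftrightarrow a_s$, at the cost of one sign from $\varepsilon$. Hence it is proportional to $\varepsilon_{b_1 \cdots b_D}$, and evaluating both sides at $(b_1,\dots,b_D)=(1,\dots,D)$ fixes the constant to be $\varepsilon_{a_1 \cdots a_D}\, A_{a_1 1}\cdots A_{a_D D} = \det A$ by the Leibniz formula.

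Next I would apply the bridge identity to the first Levi-Civita symbol of the statement. Grouping the $p$ factors $A_{i_1 m_1}\cdots A_{i_p m_p}$ with the $q$ factors $A_{j_1 k_1}\cdots A_{j_q k_q}$, the combined summed set $(i_1,\dots,i_p,j_1,\dots,j_q)$ fills all $D$ slots of $\varepsilon_{i_1 \cdots i_p j_1 \cdots j_q}$, with free outputs $(m_1,\dots,m_p,k_1,\dots,k_q)$. The bridge identity collapses this block to $\det A \;\varepsilon_{m_1 \cdots m_p k_1 \cdots k_q}$, so the left-hand side becomes
\[
\det A \; \varepsilon_{m_1 \cdots m_p k_1 \cdots k_q}\, \varepsilon_{m_1 \cdots m_p n_1 \cdots n_q},
\]
in which the two surviving symbols share exactly the $p$ indices $m_1,\dots,m_p$. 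Applying the contraction identity of the preceding theorem (equivalently, combining $\varepsilon\varepsilon = D!\,\delta$ with the block-trace identity $\delta^{i_1 \cdots i_p k_1 \cdots k_q}_{i_1 \cdots i_p j_1 \cdots j_q} = \tfrac{p!\,q!}{D!}\,\delta^{k_1 \cdots k_q}_{j_1 \cdots j_q}$, which is legitimate since these are numerical symbol identities insensitive to index placement) yields $p!\,q!\,\delta^{k_1 \cdots k_q}_{n_1 \cdots n_q}$, giving
\[
\varepsilon_{i_1 \cdots i_p j_1 \cdots j_q}\, \varepsilon_{m_1 \cdots m_p n_1 \cdots n_q}\, A_{i_1 m_1} \cdots A_{i_p m_p} A_{j_1 k_1} \cdots A_{j_q k_q} = p!\,q!\,\det A\; \delta^{k_1 \cdots k_q}_{n_1 \cdots n_q},
\]
and the generalized Kronecker delta is unchanged under exchange of its upper and lower blocks, matching the claimed form.

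The antisymmetry bookkeeping and the final index relabelling are mechanical. The main obstacle, and the only genuinely substantive step, is recognizing and justifying the bridge identity—in particular that all $D$ matrix factors combine into a single $\det A$ irrespective of how the indices are split into the $p$- and $q$-blocks; once this is in hand, the two earlier appendix theorems finish the argument.
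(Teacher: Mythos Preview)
Your proof is correct and takes a genuinely different route from the paper's. The paper argues indirectly: it first observes that the left-hand side is totally antisymmetric in $\{k_1,\dots,k_q\}$ and in $\{n_1,\dots,n_q\}$, then shows (via a symmetric/antisymmetric clash when some $k_r$ coincides with some $m_s$) that it must be proportional to $\delta^{k_1\cdots k_q}_{n_1\cdots n_q}$, and finally fixes the constant by contracting both sides with $\delta^{n_1\cdots n_q}_{k_1\cdots k_q}$ and invoking the full $D!\det A$ expansion. Your approach is constructive rather than ansatz-plus-normalization: the bridge identity $\varepsilon_{a_1\cdots a_D}A_{a_1 b_1}\cdots A_{a_D b_D}=\det A\,\varepsilon_{b_1\cdots b_D}$ eats all $D$ copies of $A$ in one shot, after which Theorem~5 applies verbatim. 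This is shorter and avoids the somewhat delicate ``why must it be proportional to $\delta$'' argument; the paper's method, on the other hand, mirrors the proportionality-plus-trace pattern used throughout the appendix and in the main text, so it is stylistically consistent with the rest of the work even if less direct here.
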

\begin{proof}
The left-hand side must be proportional to $ \delta_{n_1 \cdots n_q}^{k_1 \cdots k_q}$, since both expressions are antisymmetric in the indices $\{k_1, \ldots, k_q\}$ and $\{n_1, \ldots, n_q\}$.

To see this, assume for contradiction that a term survives where $k_r \in \{m_1, \ldots, m_p\}$—i.e., some $k_r = m_s$. Then the combination $A_{i_s m_s} A_{j_r k_r}$ becomes symmetric under the exchange $i_s \leftrightarrow j_r$, while the Levi-Civita symbol $\varepsilon_{i_1 \cdots i_p j_1 \cdots j_q}$ is antisymmetric under that same exchange. Hence the term must vanish, and only terms where $\{k_1, \ldots, k_q\}$ matches $\{n_1, \ldots, n_q\}$ survive, up to antisymmetric reshuffling. Therefore, the structure must be:
\begin{align}
\varepsilon_{i_1 \cdots i_p j_1 \cdots j_q} \, \varepsilon_{m_1 \cdots m_p n_1 \cdots n_q}
\, A_{i_1 m_1} \cdots A_{i_p m_p} A_{j_1 k_1} \cdots A_{j_q k_q}
= \kappa \, \delta_{n_1 \cdots n_q}^{k_1 \cdots k_q}.
\end{align}
To determine the proportionality constant $\kappa$, contract both sides with $\delta_{k_1 \cdots k_q}^{n_1 \cdots n_q}$. The right-hand side yields:
\begin{align}
\kappa \, \delta_{n_1 \cdots n_q}^{k_1 \cdots k_q} \delta_{k_1 \cdots k_q}^{n_1 \cdots n_q}
= \kappa \, \delta_{n_1 \cdots n_q}^{n_1 \cdots n_q}
= \kappa \, \binom{D}{q}.
\end{align}
Now consider the left-hand side:
\begin{align}
&\varepsilon_{i_1 \cdots i_p j_1 \cdots j_q} \, \varepsilon_{m_1 \cdots m_p k_1 \cdots k_q}
A_{i_1 m_1} \cdots A_{i_p m_p} A_{j_1 k_1} \cdots A_{j_q k_q} \delta_{k_1 \cdots k_q}^{n_1 \cdots n_q} \nn\\
&= \varepsilon_{m_1 \cdots m_p n_1 \cdots n_q} \, \varepsilon_{r_1 \cdots r_p s_1 \cdots s_q}
A_{m_1 r_1} \cdots A_{m_p r_p} A_{n_1 s_1} \cdots A_{n_q s_q}
= D! \, \det A.
\end{align}
Equating both sides:
\begin{align}
D! \, \det A = \kappa \, \binom{D}{q} = \kappa \, \frac{D!}{p! \, q!} \quad \Rightarrow \quad \kappa = p! \, q! \, \det A.
\end{align}
This completes the proof.
\end{proof}

\begin{theorem} 
Let $D = p + q$, and let $t$ denote the number of time-like directions in the metric. Then, for a $p$-form expressed in local coordinates:
\begin{align}
* * (dx^{i_1} \wedge \cdots \wedge dx^{i_p}) = (-1)^{pq + t} \, dx^{i_1} \wedge \cdots \wedge dx^{i_p}.
\end{align}
\end{theorem}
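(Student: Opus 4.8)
The plan is to apply the coordinate definition of the Hodge dual, equation~\eqref{dual} (in its manifestly non-covariant form with explicit $g^{-1}$ factors), twice and then reduce the resulting product of two Levi-Civita symbols to a generalized Kronecker delta by means of the identity~\eqref{det.A.delta} with $A = g^{-1}$. Writing $\alpha = dx^{i_1}\wedge\cdots\wedge dx^{i_p}$, the first application produces a $q$-form carrying $p$ inverse-metric factors, a factor $\sqrt{|g|}$, and a symbol $\varepsilon_{j_1\cdots j_q k_1\cdots k_p}$; the second application, acting on each basis $q$-form $dx^{j_1}\wedge\cdots\wedge dx^{j_q}$, contributes another $\tfrac{1}{p!}\sqrt{|g|}$, a further $q$ inverse-metric factors, and a second symbol $\varepsilon_{r_1\cdots r_p s_1\cdots s_q}$. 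Collecting everything, one finds
\begin{align}
*\,*\,\alpha = \tfrac{1}{p!\,q!}\,|g|\;
\varepsilon_{j_1\cdots j_q k_1\cdots k_p}\,\varepsilon_{r_1\cdots r_p s_1\cdots s_q}\,
g^{-1}_{i_1 k_1}\cdots g^{-1}_{i_p k_p}\,
g^{-1}_{j_1 s_1}\cdots g^{-1}_{j_q s_q}\;
dx^{r_1}\wedge\cdots\wedge dx^{r_p}. \nn
\end{align}

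The core step is to recognize the contracted epsilon--epsilon--$g^{-1}$ combination as an instance of~\eqref{det.A.delta}: the $q$ pairs contracted by $g^{-1}_{j s}$ play the role of the internally summed indices, the $p$ pairs contracted by $g^{-1}_{i k}$ carry the surviving free index $i$, and the $p$-block $r$ of the second symbol remains free. To match the precise index pattern demanded by~\eqref{det.A.delta} I would first reorder the index blocks on the two symbols into the required order; permuting the length-$p$ block past the length-$q$ block is exactly where the overall permutation sign $(-1)^{pq}$ is produced. With the indices aligned, \eqref{det.A.delta} replaces the whole structure by $p!\,q!\,\det(g^{-1})\,\delta^{i_1\cdots i_p}_{r_1\cdots r_p}$, so the factorials cancel against the $\tfrac{1}{p!\,q!}$ prefactor.

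It then remains only to assemble the scalar factors. Using $|g| = (-1)^t\det g$ together with $\det(g^{-1}) = 1/\det g$, the determinants cancel and leave precisely $(-1)^t$; and the generalized Kronecker delta $\delta^{i_1\cdots i_p}_{r_1\cdots r_p}$ contracted against the totally antisymmetric basis $p$-form $dx^{r_1}\wedge\cdots\wedge dx^{r_p}$ returns $dx^{i_1}\wedge\cdots\wedge dx^{i_p}$ by the Lemma stated at the beginning of this appendix. Combining the reordering sign $(-1)^{pq}$ with the signature factor $(-1)^t$ yields the claimed $(-1)^{pq+t}$. The one genuine obstacle is the sign bookkeeping in matching the index pattern to~\eqref{det.A.delta}: an error in permuting the $p$- and $q$-blocks would flip the parity of $pq$, so I would fix an explicit ordering convention at the outset and sanity-check the parity against the familiar flat-space result $*\,* = (-1)^{p(D-p)}$ in Euclidean signature.
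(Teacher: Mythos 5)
Your proposal is correct and follows essentially the same route as the paper's own proof: a double application of the non-covariant form of the Hodge star, reduction of the $\varepsilon\varepsilon\,g^{-1}\cdots g^{-1}$ structure via the identity~\eqref{det.A.delta} with $A=g^{-1}$ (the block reordering supplying the $(-1)^{pq}$), and cancellation of $\det g \cdot \det g^{-1}$ to leave $(-1)^{pq+t}$. Your attention to the block-permutation sign is exactly the point where the paper's computation silently absorbs that factor, so no gap remains.
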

\begin{proof}
We revisit this well-known identity using a non-covariant representation. Begin by applying the Hodge star once:
\begin{align}
* (dx^{i_1} \wedge \cdots \wedge dx^{i_p}) &= \frac{1}{q!} \sqrt{(-1)^t \det g} \, g^{-1}_{i_1 k_1} \cdots g^{-1}_{i_p k_p} \, \varepsilon_{j_1 \cdots j_q k_1 \cdots k_p} \, dx^{j_1} \wedge \cdots \wedge dx^{j_q}.
\end{align}
Applying the Hodge star again gives:
\begin{align}
**(dx^{i_1} \wedge \cdots \wedge dx^{i_p}) 
&= \frac{1}{q! p!} (-1)^t \det g \, g^{-1}_{i_1 k_1} \cdots g^{-1}_{i_p k_p} \, \varepsilon_{j_1 \cdots j_q k_1 \cdots k_p} \nn\\
&\quad \times g^{-1}_{j_1 s_1} \cdots g^{-1}_{j_q s_q} \, \varepsilon_{r_1 \cdots r_p s_1 \cdots s_q} \, dx^{r_1} \wedge \cdots \wedge dx^{r_p}.
\label{startstar}
\end{align}
Now, applying Theorem~6 (Eq.~\eqref{det.A.delta}) to contract the epsilon symbols, we obtain:
\begin{align}
**(dx^{i_1} \wedge \cdots \wedge dx^{i_p})
&= (-1)^{pq + t} \frac{1}{p! \, q!} \det g \cdot \det g^{-1} \cdot p! \, q! \, \delta_{r_1 \cdots r_p}^{i_1 \cdots i_p} \, dx^{r_1} \wedge \cdots \wedge dx^{r_p} \nn\\
&= (-1)^{pq + t} \, dx^{i_1} \wedge \cdots \wedge dx^{i_p},
\end{align}
where we used $\det g \cdot \det g^{-1} = 1$ and the identity $\delta_{r_1 \cdots r_p}^{i_1 \cdots i_p} dx^{r_1} \wedge \cdots \wedge dx^{r_p} = dx^{i_1} \wedge \cdots \wedge dx^{i_p}$. This completes the proof.
\end{proof}

\bibliographystyle{jhep}
\bibliography{HodgeDualSphere}
\end{document}